\documentclass[fleqn,10pt]{wlscirep}

\usepackage{epsfig}
\usepackage{amsmath}
\usepackage{graphicx}
\usepackage{dcolumn}
\usepackage{bm}
\usepackage{epstopdf}
\usepackage{bbold}
\usepackage{amsthm}
\usepackage{subfigure}
\newtheorem{theorem}{Theorem}

\title{Extracting quantum coherence via steering}

\author[1,*]{Xueyuan Hu}
\author[2]{Heng Fan}
\affil[1]{School of Information Science and Engineering, and Shandong Provincial Key Laboratory of Laser Technology and Application, Shandong University, Jinan, 250100, P. R. China}
\affil[2]{Institute of Physics, Chinese Academy of Sciences, Beijing 100190, China}

\affil[*]{xyhu@sdu.edu.cn}


\begin{abstract}
As the precious resource for quantum information processing, quantum coherence
can be created remotely if the involved two sites are quantum correlated.
It can be expected that the amount of coherence created should depend on
the quantity of the shared quantum correlation, which is also a resource.
Here, we establish an operational connection between coherence induced by steering
and the quantum correlation. We find that the steering-induced coherence quantified by such as
relative entropy of coherence and trace-norm of coherence
is bounded from above by a known quantum correlation measure defined as the one-side measurement-induced disturbance.
The condition that the upper bound
saturated by the induced coherence varies for different measures of coherence.
The tripartite scenario is also studied and similar conclusion can be obtained.
Our results provide the operational connections between local and non-local resources in quantum information processing.
\end{abstract}
\begin{document}

\flushbottom
\maketitle
%
%
\thispagestyle{empty}



Quantum coherence, being at the heart of quantum mechanics, plays a key role in quantum information processing
such as quantum algorithms \cite{INSPEC:4865020} and quantum key distribution \cite{PhysRevLett.68.3121}.
Inspired by the recently proposed resource theory of quantum coherence \cite{PhysRevLett.113.140401,PhysRevLett.115.070503}, researches are focused on the quantification \cite{PhysRevA.92.022124,PhysRevLett.113.170401} and evolution \cite{PhysRevLett.115.210403,PhysRevLett.115.113002} of quantum coherence, as well as its operational meaning \cite{PhysRevA.92.022124,arXiv:1506.07975} and role in quantum information tasks \cite{PhysRevLett.113.030502,arXiv:1505.03792,arXiv:1509.01300}.
When multipartite systems are considered, coherence is closely related to the well-established quantum information resources, such as entanglement \cite{RevModPhys.81.865} and discord-type quantum correlations \cite{RevModPhys.84.1655}. It is shown that the coherence of an open system is frozen under the identical dynamical condition where discord-type quantum correlation is shown to freeze \cite{PhysRevLett.114.210401}. Further, discord-type quantum correlation can be interpreted as the minimum coherence of a multipartite system on tensor-product basis \cite{arXiv1506.01773}. An operational connection between local coherence and non-local quantum resources (including entanglement \cite{PhysRevLett.115.020403} and discord \cite{arXiv:1510.06179}) is presented. It is shown that entanglement or discord between a coherent system and an incoherent ancilla can be built
by using incoherent operations, and the generated entanglement or discord is bounded from above by the initial coherence. The converse procedure is of equal importance: to extract coherence locally from a spatially separated but quantum correlated bipartite state. The extraction of coherence with the assistance of a remote party has been studied in the asymptotical limit \cite{arXiv:1507.08171}. In this paper, we ask how we extract coherence locally from a \emph{single copy} of a bipartite state.

The quantum steering has long been noted as a distinct nonlocal quantum effect \cite{steer35} and has attracted recent research interest both theoretically and experimentally \cite{PhysRevLett.98.140402,Nat.Phys.2010,Nat.Photon.2012,PhysRevLett.112.180404,Verstraete2002,Shi2011,PhysRevLett.113.020402,steer_njp,PhysRevA.92.012311,PhysRevA.91.022301,arXiv:1507.02358}. It demonstrates that Alice can remotely change Bob's state by her local selective measurement if they are correlated, and is hence a natural candidate to accomplish the task of remote coherence extraction.

In this paper, we present the study of coherence extraction induced by quantum steering and the involved quantum correlation. Precisely, we introduce the quantity of steering-induced coherence (SIC) for bipartite quantum states. 
Here Bob is initially in an incoherent state but quantum correlated to Alice. Alice's local projective measurement can thus steer Bob to a new state which might be coherent. The SIC $\bar{\mathcal C}$ is then defined as the maximal average coherent of Bob's steered states that can be created by Alice's selective projective measurement. When there is no obvious incoherent basis for Bob, (for example, Bob's system is a polarized photon), the definition can be generalized to arbitrary bipartite system where Bob's incoherent basis is chosen as the eigenbasis of his reduced state. In this case, the SIC can be considered as a basis-free measure of Bob's coherence. The main result of this paper is building an operational connection between the SIC and the shared quantum correlation between Alice and Bob. We prove that the SIC can not surpass the initially shared $B$-side quantum correlation, which is a known quantum correlation measure named as measurement-induced disturbance (MID) $\mathcal Q_B$ \cite{PhysRevA.77.022301}. States whose relative entropy SIC $\bar{\mathcal C}^r$ can reach its upper bound $\mathcal Q_B^r$ are identified as maximally correlated states. For two-qubit states, while the trace-norm SIC $\bar{\mathcal C}^{t}$ can always reach the corresponding $\mathcal Q_B^t$, we find an example of two-qubit state whose $\bar{\mathcal C}^r$ is strictly less than $\mathcal Q_B^r$. This indicates that the condition for $\bar{\mathcal C}$ to reach the upper bound strongly depends on the measure of coherence. We further generalize the results to a tripartite scenario, where Alice can induce entanglement between Bob and Charlie in a controlled way. Since coherence of a single party is generally robust than quantum correlations involving two parties, our work provides a way to ``store'' quantum correlation as coherence. Besides, the coherent state induced by steering
can be widely used for quantum information processing. Our results establish the intrinsic connection between coherence and quantum correlation by steering.

\section*{Results}

\subsection*{Coherence and measurement-induced disturbance}
A state is said to be incoherent on the reference basis $\Xi=\{|\xi_i\rangle\}$, if it can be written as \cite{PhysRevLett.113.140401}
\begin{equation}
\sigma_\Xi=\sum_ip_i|\xi_i\rangle\langle \xi_i|.
\end{equation}
Let $\mathcal I_\Xi$ be the set of incoherent state on basis $\Xi$. The incoherent completely positive trace-preserving (ICPTP) channel is defined as
\begin{equation}
\Lambda_{\mathrm{ICPTP}(\cdot)}=\sum_nK_n(\cdot)K_n^\dagger,
\end{equation}
where the Kraus operators $K_n$ satisfy $K_n\mathcal I_\Xi K_n^\dagger\subset\mathcal I_\Xi$. According to Ref. \cite{PhysRevLett.113.140401}, a proper coherence measure $C(\rho,\Xi)$ of a quantum state $\rho$ on a fixed reference basis $\Xi$ should satisfy the following three conditions. (C1) $C(\rho,\Xi)=0$ iff $\rho\in\mathcal I_\Xi$. (C2) Monotonicity under selective measurements on average: $C(\rho,\Xi)\geq\sum_np_nC(\rho_n,\Xi),\ \forall \{K_n\}$ satisfying $K_n\mathcal I_\Xi K_n^\dagger\subset\mathcal I_\Xi$ and $\sum_nK_n^\dagger K_n=I$, where $\rho_n=K_n\rho K_n^\dagger/p_n$, occurring with probability $p_n=\mathrm{tr}[K_n\rho K_n^\dagger]$, is the state corresponding to outcome $n$. (C3) Convexity: $\sum_np_nC(\rho_n,\Xi)\geq C(\sum_np_n\rho_n,\Xi)$.

A candidate of coherence measure is the minimum distance between $\rho$ and the set of incoherent states
\begin{equation}
C(\rho,\Xi)=\min_{\sigma\in\mathcal I_\Xi} D(\rho,\sigma),\label{coherence}
\end{equation}
where $D(\cdot,\cdot)$ is a distance measure on quantum states and satisfies the following five conditions. (D1) $D(\rho,\sigma)=0$ iff $\rho=\sigma$. (D2)  Monotonicity under selective measurements on average: $D(\rho,\sigma)\geq\sum_np_nD(\rho_n,\sigma),\ \forall \{K_n\}$. (D3) Convexity: $\sum_np_nD(\rho_n,\sigma)\geq D(\sum_np_n\rho_n,\sigma)$. (D4) $D(\rho,\Lambda^\Xi(U\rho U^\dagger))\geq D(\rho,\Lambda^\Xi(\rho))$, $\forall\Xi,U$, where $U$ is a unitary operation, and $\Lambda^\Xi$ denotes the projective measurement on basis $\Xi$: $\Lambda^\Xi(\cdot)\equiv\sum_i|\xi_i\rangle\langle\xi_i|(\cdot)|\xi_i\rangle\langle\xi_i|$. (D5) $D(\rho,\sigma)=D(\rho\otimes\rho_a,\sigma\otimes\rho_a)$. Conditions (D1-D3) make sure that (C1-C3) is satisfied by the coherence measure defined in Eq. (\ref{coherence}). When (D4) is satisfied, the coherence of $\rho$ on the reference basis $\Xi$ can be written as
\begin{equation}
C(\rho,\Xi)=D(\rho,\Lambda^\Xi(\rho)).\label{coherence1}
\end{equation}
As proved in Ref. \cite{PhysRevLett.113.140401}, the relative entropy $D^r(\rho,\sigma)=S(\rho||\sigma)\equiv\mathrm{Tr}(\rho\log_2\rho-\rho\log_2\sigma)$ and the $l_1$ matrix norm $D^{l_1}(\rho,\sigma)=\|\rho-\sigma\|_{l_1}\equiv\sum_{ij}|\rho_{ij}-\sigma_{ij}|$ satisfies all the conditions (D1-D4), which makes the corresponding coherence measures $C^r(\rho,\Xi)=D^r(\rho,\Lambda^\Xi(\rho))$ and $C^{l_1}(\rho,\Xi)=D^{l_1}(\rho,\Lambda^\Xi(\rho))$ satisfy the conditions (C1-C3). As discovered recently \cite{arXiv:1606.03181}, the trace-norm distance $D^t(\rho,\sigma)\equiv\mathrm{tr}\sqrt{(\rho_B-\sigma_B)^\dagger(\rho_B-\sigma_B)}$ does not satisfy (D2).

Introduced in Ref. \cite{PhysRevA.77.022301}, MID characterizes the quantumness of correlations. MID of a bipartite system $\rho$ is defined as the minimum disturbance caused by local projective measurements that do not change the reduced states $\rho_A\equiv\mathrm{Tr}_B(\rho)$ and $\rho_B\equiv\mathrm{Tr}_A(\rho)$
\begin{equation}
\mathcal Q(\rho)=\inf_{\mathbb E_A,\mathbb E_B}D(\rho,\Lambda_A^{\mathbb E_A}\otimes\Lambda_B^{\mathbb E_B}(\rho)),\label{MID}
\end{equation}
where the infimum is taken over projective measurements which satisfy $\Lambda_A^{\mathbb E_A}(\rho_A)=\rho_A$ and $\Lambda_B^{\mathbb E_B}(\rho_B)=\rho_B$, and $D(\cdot,\cdot)$ is a distance on quantum states, which satisfies conditions (D1-D5) and further (D6) $D(U\rho U^\dagger,U\sigma U^\dagger)=D(\rho,\sigma)$. It can be checked that (D6) can be satisfied by relative entropy but not satisfied by $l_1$-norm. Comparing Eq. (\ref{MID}) with Eq. (\ref{coherence1}), we find MID is just the coherence of the bipartite state $\rho$ on the local eigenbasis $\mathbb E_A\otimes\mathbb E_B$.

For later convenience, we introduce $B$-side MID as
\begin{equation}
\mathcal Q_B(\rho)=\inf_{\mathbb E_{B}:\Lambda_{B}^{\mathbb E_B}(\rho_{B})=\rho_B}D(\rho,I_A\otimes\Lambda_B^{\mathbb E_B}(\rho)).
\end{equation}
$\mathcal Q_B$ goes to zero for $B$-side classical states, which can be written as $\rho_{B-cla}=\sum_i\rho_i^A\otimes|e_i^B\rangle\langle e_i^B|$, while $\mathcal Q$ is strictly positive for $\rho_{B-cla}$ if $\exists i,\ [\rho_A,\rho_i^A]\neq0$. Notice that for $\mathcal Q_B$ one do not have a coherence interpretation.

\subsection*{Definition of steering-induced coherence}
As shown in Fig. \ref{SIC}, Alice and Bob initially share a quantum correlated state $\rho$, and Bob's reduced state $\rho_B$ is incoherent on his own basis. Now Alice implements a local projective measurement on basis $\Xi_A$. When she obtains the result $i$ (which happens with probability $p^{\xi_i}$), Bob is ``steered'' to a coherent state $\rho_B^{\xi_i}$. We introduce the concept of SIC for characterizing Alice's ability to create Bob's coherence on average using her local selective measurement.

\begin{figure}
\includegraphics[width=0.8\columnwidth]{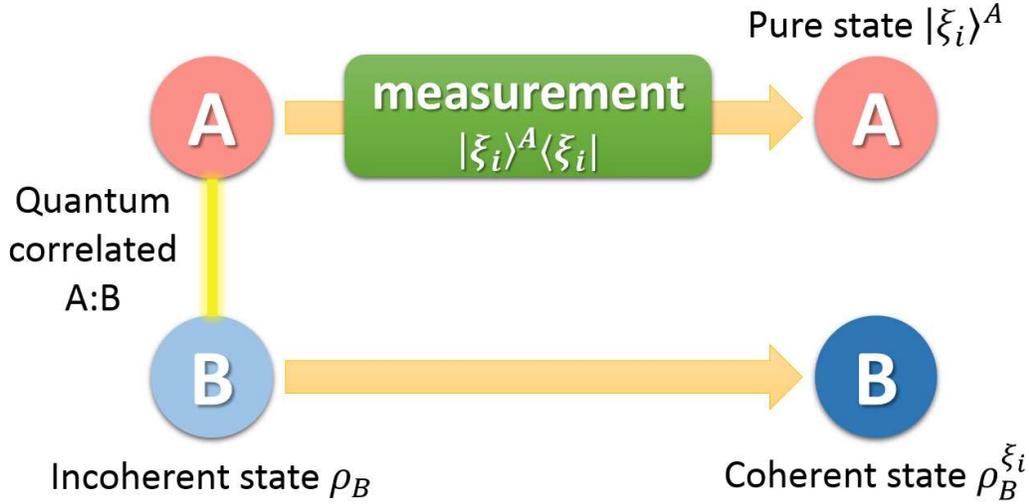}
\caption{\label{SIC}(color online). Scheme for creating Bob's coherence by Alice's local measurement and classical communication. When Alice implements local projective measurement on basis $\Xi_A=\{|\xi_i^A\rangle\}$, she gets result $i$ with probability $p^{\xi_i}$ and meanwhile steer Bob's state to $\rho^{\xi_i}_B$ which can be coherent on Bob's initial eigenstate $\mathbb E_B$. SIC is defined as the maximal average coherence of states $\rho^{\xi_i}_B$ that can be created by Alice's local selective measurement.}
\end{figure}

\textbf{Definition (Steering-induced coherence, SIC)}. \emph{For a bipartite quantum state $\rho$, Alice implements projective measurement on basis $\Xi_A=\{\xi_i^A\}$ ($i=0,\cdots,d_A-1)$. With probability $p^{\xi_i}=\mathrm{tr}[\rho(\xi^A_i\otimes I)]$, she obtains the result $\xi^A_i=|\xi^A_i\rangle\langle\xi^A_i|$, which steers Bob's state to $\rho^{\xi_i}_B=\langle\xi_i^A|\rho|\xi^A_i\rangle/p^{\xi_i}$. Let $\mathbb E_B=\{|e_j^B\rangle\}$ ($j=0,\cdots,d_B-1$) be the eigenbasis of reduced states $\rho_B$. The steering-induced coherence is defined as the maximum average coherence of Bob's steered states on the reference basis $\mathbb E_B$
\begin{equation}
\bar{\mathcal C}(\rho)=\inf_{\mathbb E_B}\left[\max_{\Xi_A}\sum_ip^{\xi_i}C(\rho_B^{\xi_i},\mathbb E_B)\right].\label{SCoh}
\end{equation}
where the maximization is taken over all of Alice's projective measurement basis $\Xi_A$, and the infimum over $\mathbb E_{B}$ is taken when $\rho_{B}$ is degenerate and hence $\mathbb E_{B}$ is not unique.}

Since Bob's initial state $\rho_B$ is incoherent on its own basis $\mathbb E_B$, the SIC $\bar{\mathcal C}(\rho)$ describes the maximum ability of Alice's local selective measurement to create Bob's coherence on average. We verify the following properties for $\bar{\mathcal C}(\rho)$.\\
(E1) $\bar{\mathcal C}(\rho)\geq0$, and $\bar{\mathcal C}(\rho)=0$ iff $\rho$ is a $B$-side classical state.\\
(E2) Non-increasing under Alice's local completely-positive trace-preserving channel: $\bar{\mathcal C}(\Lambda_A\otimes I(\rho))\leq\bar{\mathcal C}(\rho)$.\\
(E3) Monotonicity under Bob's local selective measurements on average: $\bar{\mathcal C}(\rho)\geq\sum_np_n\bar{\mathcal C}(\rho_n),\ \forall \{K_n^B\}$ satisfying $K_n^B\mathcal I_{\mathbb E_B}K_n^{B\dagger}\subset\mathcal I_{\mathbb E_B}$, where $\rho_n=I_A\otimes K_n^B\rho(I_A\otimes K_n^B)^\dagger/p_n$ and $p_n=\mathrm{tr}[I_A\otimes K_n^B\rho(I_A\otimes K_n^B)^\dagger]$.\\
(E4) Convexity: $\sum_np_n\bar{\mathcal C}(\rho_n)\geq\bar{\mathcal C}(\sum_np_n\rho_n)$.

\begin{proof}
Condition (E1) can be proved using the method in Ref. \cite{arXiv:1507.02358}, where it is proved that $\mathcal C_i(\rho)\equiv\max_{\xi^A_i}C(\rho_B^{\xi_i},\mathbb E_B)$ vanishes iff $\rho$ is a $B$-side classical state. (E2) is verified by noticing that the local channel $\Lambda_A$ can not increase the set of Bob's steered states, and hence the optimal steered states $\{\rho_B^{\xi_i}\}$ may not be steered to after the action of channel $\Lambda_A$. The conditions (E3) and (E4) are directly derived from conditions (C2) and (C3) for coherence.
\end{proof}

\subsection*{Relation between SIC and MID}
Intuitively, Alice's ability to extract coherence on Bob's side should depend on the quantum correlation between them. The following theorem gives a quantitative relation between the SIC $\bar{\mathcal C}(\rho)$ and quantum correlation measured by $B$-side MID $\mathcal Q_B(\rho)$.

\begin{theorem}\label{thm:ubound}
When the distance measure in the definition of MID and coherence satisfies conditions (D1-D6), the SIC is bounded from above by the $B$-side MID, i.e.,
\begin{equation}
\bar{\mathcal C}(\rho)\leq \mathcal Q_B(\rho).\label{th1}
\end{equation}

\begin{proof}We start with the situation that $\rho_B$ is non-degenerate and hence one do not need to take the infimum in Eqs. (\ref{MID}) and (\ref{SCoh}). By definition, we have
\begin{equation}
\mathcal Q_B(\rho)=D(\rho,\rho^{\mathbb E_B}),
\end{equation}
where $\rho^{\mathbb E_B}=I\otimes\Lambda^{\mathbb E_B}(\rho)$.

After Alice implements a selective measurement on basis $\Xi_A$, the average coherence of Bob's state becomes
\begin{eqnarray}
\bar{\mathcal C}_{\Xi_A}(\rho)&=&\sum_ip^{\xi_i}D(\rho_B^{\xi_i},\Lambda^{\mathbb E_B}(\rho_B^{\xi_i}))\nonumber\\
&=&\sum_ip^{\xi_i}D\left(\frac{\xi_i^A\rho\xi_i^{A\dagger}}{p^{\xi_i}},\frac{\xi_i^A\rho^{\mathbb E_B}\xi_i^{A\dagger}}{p^{\xi_i}}\right).
\end{eqnarray}
The second equality holds because $D(\rho_B^{\xi_i},\Lambda^{\mathbb E_B}(\rho_B^{\xi_i}))=D(\xi_i^A\otimes\rho_B^{\xi_i},I_A\otimes\Lambda^{\mathbb E_B}(\xi_i^A\otimes\rho_B^{\xi_i}))$ (condition (D5)) and $\xi_i^A\otimes\rho_B^{\xi_i}=\frac{\xi_i^A\rho\xi_i^{A\dagger}}{p^{\xi_i}}$. Since selective measurement does not increase the state distance (condition (D2)), we have $\bar{\mathcal C}_{\Xi_A}(\rho)\leq\mathcal Q_B(\rho),\forall\Xi_A$, and hence Eq. (\ref{th1}) holds.

The generalization to degenerate state is straightforward. We choose $\mathbb E^o_B$ to reach the infimum of $\mathcal Q_B$, which may not be the optimal eigen-basis for $\bar{\mathcal C}$. Hence we have $\mathcal Q_B(\rho)=D\left(\rho,I_A\otimes \Lambda^{\mathbb E^o_B}_B(\rho)\right)\geq\max_{\Xi_A}\sum_ip^{\xi_i}C(\rho_B^{\xi_i},\mathbb E^o_B)\geq\bar{\mathcal C}(\rho)$.
\end{proof}
\end{theorem}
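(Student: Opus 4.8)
The plan is to reduce both quantities to a distance from a common dephased state and then exploit monotonicity of $D$ under Alice's selective measurement. First I would dispose of the generic case in which $\rho_B$ is non-degenerate, so that its eigenbasis $\mathbb E_B$ is unique and no infimum over $\mathbb E_B$ is active. Writing $\rho^{\mathbb E_B}=I_A\otimes\Lambda^{\mathbb E_B}(\rho)$, condition (D4) expresses the $B$-side MID as $\mathcal Q_B(\rho)=D(\rho,\rho^{\mathbb E_B})$ and, for each of Alice's outcomes, the steered coherence as $C(\rho_B^{\xi_i},\mathbb E_B)=D(\rho_B^{\xi_i},\Lambda^{\mathbb E_B}(\rho_B^{\xi_i}))$. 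The goal then reduces to showing that for every basis $\Xi_A$ the average $\sum_i p^{\xi_i}C(\rho_B^{\xi_i},\mathbb E_B)$ does not exceed $D(\rho,\rho^{\mathbb E_B})$.

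The central manipulation is to lift each single-system Bob distance back to the bipartite level. Using ancilla invariance (D5) with the ancilla prepared in $\xi_i^A$, I would write $D(\rho_B^{\xi_i},\Lambda^{\mathbb E_B}(\rho_B^{\xi_i}))=D(\xi_i^A\otimes\rho_B^{\xi_i},\,\xi_i^A\otimes\Lambda^{\mathbb E_B}(\rho_B^{\xi_i}))$, and then recognize the two arguments as the normalized post-measurement states $(\xi_i^A\otimes I)\rho(\xi_i^A\otimes I)/p^{\xi_i}$ and $(\xi_i^A\otimes I)\rho^{\mathbb E_B}(\xi_i^A\otimes I)/p^{\xi_i}$ obtained by applying Alice's projector to $\rho$ and to $\rho^{\mathbb E_B}$. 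This recasts the average coherence as the average distance between the two ensembles produced when the \emph{same} selective measurement $\{\xi_i^A\otimes I\}$ acts on $\rho$ and on $\rho^{\mathbb E_B}$.

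Now monotonicity (D2) should close the bound: a selective measurement cannot increase the averaged distance, so $\sum_i p^{\xi_i}D(\cdots)\le D(\rho,\rho^{\mathbb E_B})=\mathcal Q_B(\rho)$ uniformly in $\Xi_A$, and maximizing over $\Xi_A$ preserves the inequality. For degenerate $\rho_B$ I would fix the eigenbasis $\mathbb E_B^o$ attaining the infimum that defines $\mathcal Q_B$; since this basis need not minimize the SIC, evaluating the coherence in $\mathbb E_B^o$ can only overshoot $\bar{\mathcal C}(\rho)$, giving $\mathcal Q_B(\rho)=D(\rho,I_A\otimes\Lambda^{\mathbb E_B^o}(\rho))\ge\max_{\Xi_A}\sum_i p^{\xi_i}C(\rho_B^{\xi_i},\mathbb E_B^o)\ge\bar{\mathcal C}(\rho)$.

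The delicate step is this appeal to monotonicity, since (D2) as stated measures only one argument whereas here \emph{both} $\rho$ and $\rho^{\mathbb E_B}$ are hit by Alice's projectors. The way around it is to note that the dephasing $\Lambda^{\mathbb E_B}$ acts on $B$ alone and preserves $\rho_A$, so Alice's outcome probabilities coincide on the two states, both equal to $p^{\xi_i}$. Appending a classical register recording $i$ turns the joint selective measurement into a genuine CPTP channel; monotonicity under that channel, combined with the orthogonality of the outcome blocks and the matching probabilities, makes the post-measurement distance split exactly as $\sum_i p^{\xi_i}D(\cdots)$ and delivers the inequality. Establishing this block decomposition from (D1-D6)---or, more modestly, for the relative-entropy and trace-norm distances actually used---is the real content hiding behind the one-line invocation of (D2).
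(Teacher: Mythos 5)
Your proposal is correct and follows essentially the same route as the paper: express $\mathcal Q_B$ and the steered coherences as distances to the $\mathbb E_B$-dephased state, lift Bob's local distances to the bipartite level via (D5) with the ancilla in $\xi_i^A$, invoke monotonicity (D2) under Alice's selective measurement, and treat the degenerate case by evaluating at the basis optimal for $\mathcal Q_B$. Your closing observation is a genuine refinement rather than a digression: (D2) as literally stated dephases only one argument (and, taken literally with $\sigma$ held fixed, it would even conflict with (D1) by choosing $\rho=\sigma$ maximally mixed), so the two-sided version you construct --- equal outcome probabilities because $\Lambda^{\mathbb E_B}$ commutes with Alice's projectors and preserves $\rho_A$, a classical flag register making the selective measurement a CPTP map, and block orthogonality to split the distance into the weighted average --- is exactly what the paper's one-line appeal to (D2) implicitly assumes.
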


\begin{figure}
\includegraphics[width=0.8\columnwidth]{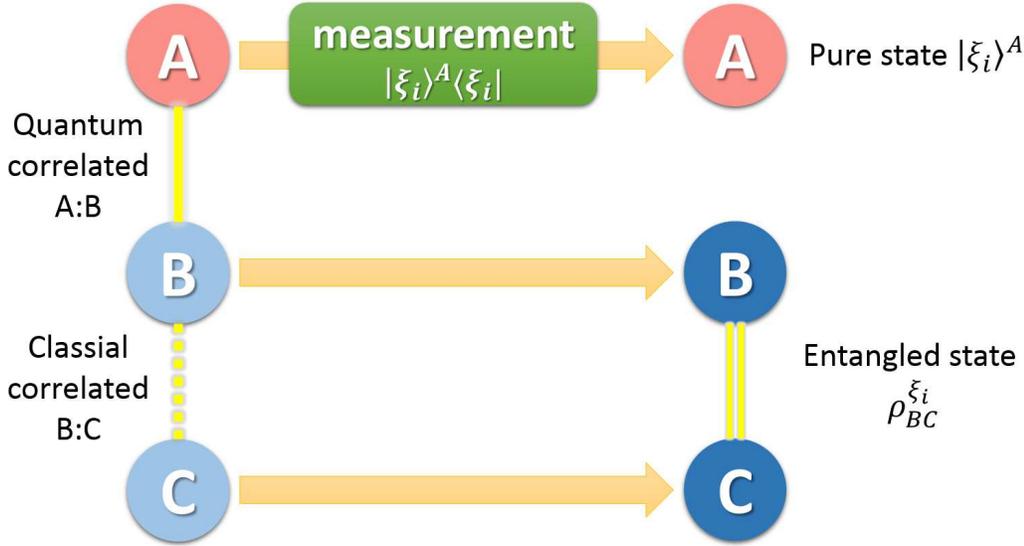}
\caption{\label{SIE}(color online). Scheme for creating entanglement between Bob and Charlie by Alice's local selective measurement. When Alice implements local projective measurement on basis $\Xi_A=\{|\xi_i^A\rangle\}$, she gets result $i$ with probability $p^{\xi_i}$ and meanwhile steer the state shared between Bob and Charlie to $\rho^{\xi_i}_{BC}$ which can be entangled.}
\end{figure}

According to Ref. \cite{PhysRevLett.115.020403}, the coherence of a quantum system $B$ can in turn be transferred to the entanglement between the system and an ancilla $C$ by incoherent operations. The established entanglement, measured by the minimum distance between the state $\rho^{BC}$ and a separable state $\sigma^{BC}:=\sum_kp_k\rho_k^B\otimes\rho_k^C$ as $E(\rho_{BC})=\min_{\sigma^{BC}\in\mathcal S}D(\rho^{BC},\sigma^{BC})$, is bounded from above by the initial coherence of $B$. Here $\mathcal S$ is the set of separable states and the state distance $D$ is required not to increase under trace-preserving channels $D(\Lambda(\rho),\Lambda(\sigma))\leq D(\rho,\sigma)$, which is automatically satisfied when we combine conditions (D2) and (D3).

This leads to the three-party protocol as shown in Fig. \ref{SIE}, where Alice's local selective measurement can create entanglement between Bob and Charlie. In this protocol, Bob and Charlie try to build entanglement between them from a product state $\varrho^{AB}\otimes|e_0^C\rangle\langle e_0^C|$, but are limited to use incoherent operations. Since $\varrho_B$ is incoherent on his eigenbasis $\mathbb E_B$, Bob and Charlie can build only classically correlated state without Alice's help. Now Alice implement projective measurement $\{|\xi_i^A\rangle\langle\xi_i^A|\}$ and on the outcome $i$, the state shared between Bob and Charlie is steered to $\rho^{\xi_i}_{BC}$ which can be entangled. The following corollary of theorem \ref{thm:ubound} gives the upper bound of the steering-induced entanglement.

\textbf{Corollary 1} \emph{Alice, Bob and Charlie share a tripartite state $\rho$, which is prepared from the product state $\varrho^{AB}\otimes|e_0^C\rangle\langle e_0^C|$ using an ICPTP channel on $BC$: $\rho=I_A\otimes\Lambda_{ICPTP}^{BC}(\varrho^{AB}\otimes|e_0^C\rangle\langle e_0^C|)$. Here $\mathbb E_B\otimes\mathbb E_C=\{|e_i^B\rangle\otimes|e_j^C\rangle\}$ is the reference basis of coherence. Alice's local selective measurement $\{|\xi_i^A\rangle\langle\xi_i^A|\}$ can establish entanglement between Bob and Charlie, and the established entanglement on average is bounded from above by the initial $B$-side MID between Alice and Bob
\begin{equation}
\sum_ip^{\xi_i}E(\rho^{\xi_i}_{BC})\leq\mathcal Q_B(\varrho^{AB}).\label{SIE_ubound}
\end{equation}}

\begin{proof}
Before Alice implement the measurement, the state shared between Bob and Charlie is incoherent on basis $\mathbb E_B\otimes\mathbb E_C$ and hence can be written as $\rho^{BC}=\sum_{ij}q_{ij}|e_i^B\rangle\langle e^B_i|\otimes|e_j^C\rangle\langle e^C_j|$. Apparently, $\mathcal Q(\rho^{BC})=0$, so Bob and Charlie is classically correlated.

On the measurement outcome $i$, the entanglement between Bob and Charlie becomes $E(\rho^{\xi_i}_{BC})$ which satisfies $E(\rho^{\xi_i}_{BC})\leq\mathcal Q(\rho^{\xi_i}_{BC})\leq C(\rho^{\xi_i}_{BC},\mathbb E_B\otimes\mathbb E_C)$. Notice that $\rho^{\xi_i}_{BC}=\Lambda_{ICPTP}^{BC}(\varrho^{\xi_i}_{B}\otimes|e_0^C\rangle\langle e_0^C|)$ and hence $C(\rho^{\xi_i}_{BC},\mathbb E_B\otimes\mathbb E_C)\leq C(\varrho^{\xi_i}_{B}\otimes|e_0^C\rangle\langle e_0^C|,\mathbb E_B\otimes\mathbb E_C)=C(\varrho^{\xi_i}_{B},\mathbb E_B)$. Eq. (\ref{SIE_ubound}) is arrived by noticing that $\sum_ip^{\xi_i}C(\varrho^{\xi_i}_{B},\mathbb E_B)\leq\mathcal Q_B(\varrho^{AB})$ from theorem 1.
\end{proof}

Now we consider a general tripartite state $\rho$. If the reduced state $\rho^{BC}=\mathrm{tr}_A{\rho}$ is non-degenerate, one can follow the same steps and prove that
\begin{equation}
\sum_ip^{\xi_i}E(\rho^{\xi_i}_{BC})\leq\mathcal Q_{\{BC\}}(\rho),\label{SIE_ubound1}
\end{equation}
whenever $\rho^{BC}$ is incoherent on basis $\mathbb E_B\otimes\mathbb E_C$. Here $\mathcal Q_{\{BC\}}$ is the $\{BC\}$-side MID between Alice and the combination of Bob and Charlie. However, when $\rho^{BC}$ is degenerate, the condition that the tripartite state $\rho$ is prepared from $\varrho^{AB}\otimes|e_0^C\rangle\langle e_0^C|$ by an ICPTP channel on $BC$ is stringent. For example, the state $\rho^X=\frac12|0\rangle^A\langle0|\otimes|\Psi^+\rangle^{BC}\langle\Psi^+|+\frac12|1\rangle^A\langle1|\otimes|\Psi^-\rangle^{BC}\langle\Psi^-|$ where $|\Psi^\pm\rangle=\frac{1}{\sqrt2}(|00\rangle\pm|11\rangle)$, with $\rho^{BC}$ incoherent on basis $\{|ij\rangle^{BC}\},(i,j=0,1)$, violates Eq. (\ref{SIE_ubound1}), since $\mathcal Q_{\{BC\}}(\rho)=0$ but the left-hand-side reaches unity for Alice's measurement basis $\Xi_A=\{|0\rangle,|1\rangle\}$. It indicate that the state $\rho^X$ can not be prepared from a product state in the form $\varrho^{AB}\otimes|e_0^C\rangle\langle e_0^C|$ using only incoherent operations.

\subsection*{States to reach the upper bound}
According to theorem \ref{thm:ubound}, Bob's maximal coherence that can be extracted by Alice's local selective measurement is bounded from above by the initial quantum correlation between them. Since the relative entropy is the only distance measure found to date which satisfies all the conditions (D1-D6), we employ relative entropy as the distance in the definition of coherence and MID, and discuss the states which can reach the upper bound of theorem \ref{thm:ubound}.

\begin{theorem}\label{thm:r} The SIC can reach $B$-side MID
\begin{equation}
\bar{\mathcal C}^r(\rho)=\mathcal Q^r_B(\rho)=S(\rho_B)-S(\rho).\label{th2}
\end{equation}
for maximally correlated states $\rho^{mc}=\sum_{ij}\rho_{ij}|ii\rangle\langle jj|$.

\begin{proof} Any maximally correlated state can be written in a pure state decomposition form $\rho=\sum_iq_i|\Psi_i\rangle\langle\Psi_i|$ with $|\Psi_i\rangle=\sum_j\lambda_{ij}|jj\rangle$ and $\langle\Psi_i|\Psi_{i'}\rangle=\delta_{ii'}$. Here $\rho_B=\sum_j(\sum_iq_i|\lambda_{ij}|^2)|j\rangle\langle j|$ has eigenbasis $\mathbb E_B=\{|i\rangle\}$. In order to calculate the $B$-side MID, we consider Bob's projective measurement $\Lambda_B^{\mathbb E_B}$, which takes the bipartite state to $\rho^{\mathbb E_B}=\sum_j(\sum_iq_i|\lambda_{ij}|^2)|jj\rangle\langle jj|$. Apparently, $S(\rho^{\mathbb E_B})=S(\rho_B)$. By definition, we have
\begin{equation}
\mathcal Q_B^r(\rho)=S(\rho\| \rho^{\mathbb E_B})=S(\rho^{\mathbb E_B})-S(\rho)=S(\rho_B)-S(\rho).
\end{equation}

In order to extract the maximum average coherence on Bob's side, Alice measures her quantum system on basis $\Xi_A$, where $|\xi_k^A\rangle=\frac{1}{\sqrt{d}}\sum_{j=0}^{d_A-1}e^{-\imath\frac{2\pi kj}{d_A}}|j\rangle$, $k=0,\cdots,d_A-1$ and $d_A$ is the dimension of $A$. On the measurement result $k$, Bob's state is steered to $\rho_B^{\xi_k}=\sum_iq_i|\phi^{\xi_k}_i\rangle\langle\phi^{\xi_k}_i|$ where $|\phi^{\xi_k}_i\rangle=\sum_je^{\imath\frac{2\pi kj}{d_A}}\lambda_{ij}|j\rangle$, which happens with probability $p^{\xi_k}=\frac{1}{d}$. Apparently, $\langle\phi^{\xi_k}_i|\phi^{\xi_k}_{i'}\rangle=\delta_{ii'}$ and hence $S(\rho_B^{\xi_k})=S(\rho)=-\sum_iq_i\log_2q_i$. Meanwhile, we have $\Lambda^{\mathbb E_B}_B(\rho_B^{\xi_k})=\sum_j(\sum_iq_i|\lambda_{ij}|^2)|j\rangle\langle j|=\rho_B$. The coherence of steered state $\rho_B^{\xi_k}$ is then
\begin{equation}
C^r(\rho_B^{\xi_k},\mathbb E_B)=S(\Lambda^{\mathbb E_B}_B(\rho_B^{\xi_k}))-S(\rho_B^{\xi_k})=S(\rho_B)-S(\rho),
\end{equation}
for any outcome $k$. Therefore we arrive at Eq. (\ref{th2}).
\end{proof}
\end{theorem}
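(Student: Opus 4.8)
The plan is to prove the claimed equality by squeezing $\bar{\mathcal C}^r(\rho)$ between the upper bound already established in Theorem~\ref{thm:ubound} and a matching lower bound obtained from one explicit choice of Alice's measurement. Since Theorem~\ref{thm:ubound} gives $\bar{\mathcal C}^r(\rho)\le\mathcal Q_B^r(\rho)$ for any state, it suffices to (i) evaluate $\mathcal Q_B^r(\rho^{mc})$ in closed form and (ii) exhibit a single measurement basis $\Xi_A$ whose average induced coherence already equals that value. I would begin from the structural fact that every maximally correlated state admits a spectral-type decomposition $\rho=\sum_iq_i|\Psi_i\rangle\langle\Psi_i|$ into orthonormal vectors $|\Psi_i\rangle=\sum_j\lambda_{ij}|jj\rangle$, so that the eigenvalues of $\rho$ are exactly $\{q_i\}$ and the eigenbasis of $\rho_B$ is the computational basis $\mathbb E_B=\{|j\rangle\}$.

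For the MID evaluation I would dephase in Bob's eigenbasis to form $\rho^{\mathbb E_B}=I_A\otimes\Lambda_B^{\mathbb E_B}(\rho)$ and use the standard simplification of relative entropy against a measurement channel, $S(\rho\|\rho^{\mathbb E_B})=S(\rho^{\mathbb E_B})-S(\rho)$, which holds because $\log_2\rho^{\mathbb E_B}$ is block diagonal in the $B$-measurement basis, so that $\mathrm{Tr}(\rho\log_2\rho^{\mathbb E_B})=\mathrm{Tr}(\rho^{\mathbb E_B}\log_2\rho^{\mathbb E_B})$. The remaining observation is that the nonzero eigenvalues of $\rho^{\mathbb E_B}$ coincide with the populations of $\rho_B$, giving $S(\rho^{\mathbb E_B})=S(\rho_B)$ and hence $\mathcal Q_B^r(\rho)=S(\rho_B)-S(\rho)$.

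For the achievability half, my candidate is Alice's Fourier (mutually unbiased) basis $|\xi_k^A\rangle=d_A^{-1/2}\sum_je^{-\imath 2\pi kj/d_A}|j\rangle$. Steering Bob by projecting each $|\Psi_i\rangle$ onto $|\xi_k^A\rangle$ sends it to the Bob vector $|\phi_i^{\xi_k}\rangle=\sum_je^{\imath 2\pi kj/d_A}\lambda_{ij}|j\rangle$ with uniform outcome probability $p^{\xi_k}=1/d_A$. The crucial calculation is that the uniform Fourier phases are pure gauge for both quantities I must control: they cancel in the overlaps $\langle\phi_i^{\xi_k}|\phi_{i'}^{\xi_k}\rangle=\langle\Psi_i|\Psi_{i'}\rangle=\delta_{ii'}$, so the steered state $\rho_B^{\xi_k}=\sum_iq_i|\phi_i^{\xi_k}\rangle\langle\phi_i^{\xi_k}|$ keeps eigenvalues $\{q_i\}$ and thus $S(\rho_B^{\xi_k})=S(\rho)$; and they cancel in the populations $|\langle j|\phi_i^{\xi_k}\rangle|^2=|\lambda_{ij}|^2$, so $\Lambda_B^{\mathbb E_B}(\rho_B^{\xi_k})=\rho_B$ for every $k$. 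Combining the two gives $C^r(\rho_B^{\xi_k},\mathbb E_B)=S(\rho_B)-S(\rho)$ independently of the outcome, and averaging with $p^{\xi_k}=1/d_A$ yields $\bar{\mathcal C}^r(\rho)\ge S(\rho_B)-S(\rho)=\mathcal Q_B^r(\rho)$, closing the gap.

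The main obstacle is guessing the optimal measurement and recognizing this twofold invariance under the Fourier phases: one needs a basis maximally unbiased with respect to the diagonal support of $\rho^{mc}$ so that steering rotates the pure-state ensemble rigidly, preserving orthogonality and hence $S(\rho_B^{\xi_k})=S(\rho)$, while leaving every diagonal population untouched so that the dephased steered state is exactly $\rho_B$. Any measurement failing either property would strictly lower the induced coherence. The relative-entropy coherence is the tight case precisely because its additive form $S(\Lambda^{\mathbb E_B}(\cdot))-S(\cdot)$ lets these two invariances combine exactly; as the paper notes for the trace norm, this exact cancellation is not guaranteed for other coherence measures.
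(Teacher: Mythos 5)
Your proposal is correct and follows essentially the same route as the paper: the same orthonormal decomposition $\rho=\sum_i q_i|\Psi_i\rangle\langle\Psi_i|$, the same evaluation $\mathcal Q_B^r=S(\rho_B)-S(\rho)$ via dephasing in $\mathbb E_B$, and the same Fourier measurement basis for Alice with the same two invariance observations (preserved orthogonality of the steered ensemble and preserved populations). The only difference is cosmetic: you invoke Theorem~\ref{thm:ubound} explicitly to close the upper-bound half, which the paper's proof leaves implicit.
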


Any pure bipartite state can be written in a Schmidt decomposition form $|\Psi\rangle=\sum_j\lambda_j|jj\rangle$, and hence belongs to the set of maximally correlated states. As introduced in Ref. \cite{PhysRevLett.115.020403}, a maximally correlated states $\rho^{mc}$ is prepared from an product states $\varrho_B\otimes|0\rangle_C\langle0|$ using an incoherent unitary operator, and its entanglement $E(\rho^{mc})$ can reach the initial coherence of $\varrho_B$. Further, for maximally correlated states, one can check the equality, $E(\rho^{mc})=\mathcal Q_B(\rho^{mc})$. Therefore, $\rho^{mc}$ can be used in a scenario where coherence is precious and entanglement is not as robust as single-party coherence. Precisely, consider the situation where Alice and Bob share a maximally correlated state $\rho^{mc}_{AB}$ but they are not use it in a hurry. To store the resource for latter use, she can transfer the entanglement between them into Bob's coherence using her local selective measurement. Bob stores his coherent state as well as Alice's measurement results. When required, Bob can perfectly retrieve the entanglement by preparing a maximally correlated state using only incoherent operations.

\subsection*{Two-qubit case, relation between $l_1$-norm of SIC and trace-norm distance of $B$-side MID}
One cannot define MID based on the $l_1$-norm distance, since it does not satisfy (D6) in general. However, it can be checked that for single-qubit states $\rho_B$ and $\sigma_B$, $||\rho_B-\sigma_B||_{l_1}=D^t(\rho_B,\sigma_B)=|\boldsymbol r^\rho-\boldsymbol r^\sigma|$ \cite{PhysRevA.91.042120}, where $\boldsymbol r^\rho$ and $\boldsymbol r^\sigma$ are Bloch vectors of $\rho_B$ and $\sigma_B$ respectively. Hence the $l_1$-norm of coherence for a single-qubit state $\rho_B$ can be written as
\begin{equation}
C^{l_1}(\rho_B,\Xi)=D^t(\rho_B,\Lambda^\Xi(\rho_B)).
\end{equation}
Besides, $D^t$, which satisfies condition (D6), is proper to be used as a distance measure for MID. Therefore, when the Bob's particle is a qubit, it is meaningful to study the relation between $l_1$-norm of SIC and trace-norm distance of $B$-side MID. Now we consider a two-qubit state $\rho$, and employ $C^{l_1}$ in the definition of $\bar{\mathcal C}(\rho)$ as in Eq. (\ref{SCoh}) and prove the following theorem.

\begin{theorem}
For a two-qubit state $\rho$, we have
\begin{equation}
\bar{\mathcal C}^{l_1}(\rho)=\mathcal Q_B^t(\rho).
\end{equation}
\end{theorem}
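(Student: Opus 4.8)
The plan is to bypass the general machinery of Theorem~\ref{thm:ubound} (which does not apply directly, since the trace norm violates (D2)) and instead establish the equality by a direct Bloch-sphere computation, exploiting that for a single-qubit $\rho_B$ the $l_1$-coherence and the trace-norm coherence coincide and both equal the length of the component of the Bloch vector orthogonal to the reference axis. First I would write $\rho=\frac14(I\otimes I+\boldsymbol a\cdot\boldsymbol\sigma\otimes I+I\otimes\boldsymbol b\cdot\boldsymbol\sigma+\sum_{mn}t_{mn}\,\sigma_m\otimes\sigma_n)$ and note that both $\bar{\mathcal C}^{l_1}$ and $\mathcal Q_B^t$ are invariant under a local unitary on Bob (the trace norm is unitarily invariant and the coherence transforms covariantly with its reference basis, which rotates along with $\rho_B$). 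For non-degenerate $\rho_B$ I therefore rotate Bob so that $\boldsymbol b=(0,0,b)$ and the reference basis $\mathbb E_B$ is the eigenbasis along $\hat z$; the degenerate case $\boldsymbol b=\boldsymbol 0$ is deferred and treated with the reference direction kept free.

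Next I would compute $\mathcal Q_B^t$ explicitly. Dephasing Bob along $\hat z$ annihilates the $\sigma_x,\sigma_y$ components on Bob's side, so $\Delta\equiv\rho-I_A\otimes\Lambda_B^{\mathbb E_B}(\rho)=\frac14[(\boldsymbol t_x\cdot\boldsymbol\sigma)\otimes\sigma_x+(\boldsymbol t_y\cdot\boldsymbol\sigma)\otimes\sigma_y]$, where $\boldsymbol t_x,\boldsymbol t_y$ are the $x$- and $y$-columns of the correlation matrix $T$. Squaring and using $\sigma_x^2=\sigma_y^2=I$ and $\sigma_x\sigma_y=i\sigma_z$ collapses the cross terms into a commutator proportional to $(\boldsymbol t_x\times\boldsymbol t_y)\cdot\boldsymbol\sigma\otimes\sigma_z$, so $\Delta^2$ is diagonalized at once and $\|\Delta\|_1$ acquires the closed form $\frac12[\sqrt{p+2q}+\sqrt{p-2q}]$ with $p=|\boldsymbol t_x|^2+|\boldsymbol t_y|^2$ and $q=|\boldsymbol t_x\times\boldsymbol t_y|$.

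Then I would compute the SIC. A projective measurement of Alice is labelled by an axis $\hat{\boldsymbol n}$; the outcomes $\pm$ occur with probability $p_\pm=\frac12(1\pm\boldsymbol a\cdot\hat{\boldsymbol n})$ and steer Bob to the Bloch vector $\boldsymbol r^\pm=(\boldsymbol b\pm T^{T}\hat{\boldsymbol n})/(1\pm\boldsymbol a\cdot\hat{\boldsymbol n})$. The key simplification is that the probability exactly cancels the denominator in the coherence $C^{l_1}(\rho_B^\pm,\mathbb E_B)=|\boldsymbol r^\pm_\perp|$, so the weighted average collapses to $\bar{\mathcal C}^{l_1}_{\Xi_A}=|(T^{T}\hat{\boldsymbol n})_\perp|=\sqrt{(\boldsymbol t_x\cdot\hat{\boldsymbol n})^2+(\boldsymbol t_y\cdot\hat{\boldsymbol n})^2}$, independent of $\boldsymbol a$ (one checks the boundary case $|\boldsymbol a\cdot\hat{\boldsymbol n}|=1$ separately, where $\rho$ is a product state and both sides vanish). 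Maximizing this quadratic form over the unit sphere gives $\bar{\mathcal C}^{l_1}=\sqrt{\lambda_{\max}(M)}$ with $M=\boldsymbol t_x\boldsymbol t_x^{T}+\boldsymbol t_y\boldsymbol t_y^{T}$, whose largest eigenvalue equals $\frac12(p+\sqrt{p^2-4q^2})$ via the $2\times2$ Gram matrix of $\{\boldsymbol t_x,\boldsymbol t_y\}$. The one-line identity $\big(\frac12[\sqrt{p+2q}+\sqrt{p-2q}]\big)^2=\frac12(p+\sqrt{p^2-4q^2})$, combined with $q^2=|\boldsymbol t_x|^2|\boldsymbol t_y|^2-(\boldsymbol t_x\cdot\boldsymbol t_y)^2$, then yields $\bar{\mathcal C}^{l_1}=\mathcal Q_B^t$. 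Note that this single computation delivers both inequalities at once, so no separate appeal to (D2) is needed.

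Finally, the degenerate case $\boldsymbol b=\boldsymbol 0$ requires taking the infimum over the (now non-unique) reference direction $\hat{\boldsymbol m}$ in both quantities. Here I would first diagonalize $T=\mathrm{diag}(t_1,t_2,t_3)$ with $|t_1|\ge|t_2|\ge|t_3|$ by local unitaries on both sides, redo the two calculations with $\hat{\boldsymbol m}$ free, and show that for a principal-axis choice both $\mathcal Q_B^t$ and $\max_{\Xi_A}\bar{\mathcal C}^{l_1}_{\Xi_A}$ reduce to the larger of the two singular values lying in the plane orthogonal to $\hat{\boldsymbol m}$, so the infimum is attained along the largest-singular-value axis and equals $|t_2|$ for both. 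I expect the main obstacle to be precisely this degenerate case: one must argue that the infimum over \emph{all} directions $\hat{\boldsymbol m}$, not merely the three principal axes, is achieved at a principal axis, which needs a short monotonicity argument rather than a plug-in evaluation. Everything else is bookkeeping once the probability-cancellation identity of the previous paragraph is in hand.
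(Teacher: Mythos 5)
Your proposal is correct and follows essentially the same route as the paper: a direct Bloch-representation computation of both $\mathcal Q_B^t$ and $\bar{\mathcal C}^{l_1}$, split into the non-degenerate ($\boldsymbol b\neq\boldsymbol 0$) and degenerate ($\boldsymbol b=\boldsymbol 0$) cases, and indeed your closed form $\frac12\bigl[\sqrt{p+2q}+\sqrt{p-2q}\bigr]$ reproduces the paper's explicit expression exactly once its triangular gauge $T_{11}=T_{12}=T_{21}=0$ is substituted (there $\boldsymbol t_x=(0,0,T_{31})$, $\boldsymbol t_y=(0,T_{22},T_{32})$). The one place you go beyond the paper is the degenerate case: the paper only evaluates principal-axis reference bases, whereas the infimum over all directions $\hat{\boldsymbol m}$ that you flag as the main obstacle is genuinely needed, and your plan closes it cleanly since your own formula gives $f(\hat{\boldsymbol m})^2=\lambda_{\max}\bigl(T(I-\hat{\boldsymbol m}\hat{\boldsymbol m}^{T})T^{T}\bigr)$, whose infimum over $\hat{\boldsymbol m}$ equals the second-largest eigenvalue of $TT^{T}$ by the Courant--Fischer min-max theorem, confirming the value $|t_2|$ for both quantities.
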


\begin{proof}The state of a two-qubit state can be written as $\rho=\frac14\sum_{i,j=0}^3\Theta_{ij}\sigma_i^A\otimes\sigma_j^B$, where the coefficient matrix $\Theta_{ij}=\mathrm{tr}(\rho\sigma_i^A\otimes\sigma_j^B)$ can be written in the block form $\Theta=\left(\begin{array}{cc}1&\boldsymbol b^\mathrm T\\ \boldsymbol a & T\end{array}\right)$.

For non-degenerate case $b\neq0$, we choose the eigenbasis of $\rho_B$ for the basis of density matrix and hence $\boldsymbol b=(0,0,b_3)$. Further, a proper basis of qubit $A$ is chosen such that the matrix $T$ is in a triangle form with $T_{11}=T_{12}=T_{21}=0$. We calculated the explicit form of $\mathcal Q_B^t(\rho)$ and $\bar{\mathcal C}^{l_1}(\rho)$ and obtain
\begin{eqnarray}
&&\mathcal Q_B^t(\rho)=\bar{\mathcal C}^{l_1}(\rho)
=\bigg[\frac{T_{22}^2+T_{31}^2+T_{32}^2}{2}\nonumber\\
&&+\frac{\sqrt{(T_{32}^2+T_{22}^2)^2+2T_{31}^2(T_{32}^2-T_{22}^2)+T_{31}^4}}{2}\bigg]^{\frac12}.
\end{eqnarray}

For degenerate case with $b=0$, we can always chose proper local basis such that $T$ is diagonal. Here we impose $T_{11}\geq T_{22}\geq T_{33}$ without loss of generality. Direct calculations lead to
\begin{equation}
\mathcal Q_B^t(\rho)=\bar{\mathcal C}^{l_1}(\rho)=T_{22}.
\end{equation}

\end{proof}

We check that, for the state $\rho=\frac12|\Phi^+\rangle\langle\Phi^+|+\frac12|01\rangle\langle01|$, we have $\bar{\mathcal C}^{r}(\rho)<\mathcal Q_B^r(\rho)$, but according to theorem 3, $\bar{\mathcal C}^{t}(\rho)=\mathcal Q_B^t(\rho)$. It means that relative entropy of coherence and $l_1$-norm of coherence are truly different measures of coherence.

\section*{Discussion}
In this paper, we have introduced the notion of SIC which characterizes the power of Alice's selective measurement to remotely create quantum coherence on Bob's site. Quantitative connection has been built between SIC and the initially shared quantum correlation measured by $B$-side MID. We show that SIC is always less than or equal to $B$-side MID.  Our results are also generalized to a tripartite scenario where Alice can build the entanglement between Bob and Charlie in a controlled way.

Next, we discuss a potential application of SIC in secrete sharing.
Suppose Alice and Bob share a two-qubit state $|\Phi\rangle=\frac{1}{\sqrt 2}(|00\rangle+|11\rangle)$, whose SIC reaches unity. When Alice measures her state on different basis, Bob's state is steered to, e.g., $\mathbb E_B^z=\{|0\rangle,|1\rangle\}$ or $\mathbb E_B^x=\{|+\rangle,|-\rangle\}$ with $|\pm\rangle=\frac{1}{\sqrt 2}(|0\rangle\pm|1\rangle)$. The coherence of states in $\mathbb E_B^z$ reach unity on basis $\mathbb E_B^x$ and vise visa. Consequently, when we measure the states in the set $\mathbb E_B^z$ on basis $\mathbb E_B^x$, the outcome is completely random. It is essential to quantum secret sharing using $|\Phi\rangle$. In this sense, the SIC is potentially related to the ability for Alice to share secret with Bob.

Coherence and various quantum correlations, such as entanglement and discord-like correlations, are generally considered as resources
in the framework of resource theories \cite{Phys.Rev.Lett.115.070503,arXiv:1506.07975}.
By coining the concept of SIC, we present an operational interpretation between
measures of those two resources, SIC and MID, and open the avenue to study
their (ir)reversibility.
The applications of various coherence quantities like SIC in many-body systems,
as in the case of entanglement \cite{CuiNC,CuiPRB,CuiPRX}, can be expected.

\bibliography{sample}

\section*{Acknowledgements}

This work was supported by NSFC under Grant Nos. 11447161, 11504205 and 91536108, the Fundamental Research Funds of Shandong
University under Grant No. 2014TB018, the National Key Basic Research Program of China under Grant No.
2015CB921003, and Chinese Academy of Sciences Grant No. XDB01010000.

\section*{Author contributions statement}

X.H. and H.F. contributed the idea. X.H. performed the calculations. X.H. and H.F. wrote the paper. All authors reviewed the manuscript and agreed with the submission.

\section*{Additional information}

\textbf{Competing financial interests:} The authors declare that they have no competing financial interests.

\end{document}